\newtheorem{thm}{Theorem} 
\theoremstyle{definition}
\newtheorem*{defn}{Definition}
\newtheorem*{rmk}{Remark}
\newtheorem{prop}{Proposition}
\newcommand{\A}{\mathcal A}
\newcommand{\T}{\mathcal T}
\newcommand{\R}{\mathbb R}
\newcommand{\C}{\mathbb C}
\newcommand\norm[1]{\left\lVert#1\right\rVert}
\title{Spatial Decay of Kubo's Canonical Correlation Functions}
\author{Bowen Yang\\ byyang$@$caltech.edu}
\affil{\textit{California Institute of Technology, Pasadena, CA 91125} }
\begin{document}

\maketitle

\begin{abstract}
    
     Kubo's canonical correlation functions (canonical correlators) describe the static response of a system in equilibrium to infinitesimal local perturbations. Knowing their decay properties with respect to spatial distance is important for many theoretical and experimental applications. For a thermal state of a system with short-range interactions, we prove that any knowledge of the decay rate of ordinary correlators readily translates into that of canonical correlators. As the former have been extensively studied, our result can lead to many new results on the latter. Throughout the paper, we use the framework of infinite-volume quantum lattice model. However, the method we use is adaptable for other physical scenarios not considered in this paper.
\end{abstract}
\section{Introduction}

Condensed matter experiments often measure the linear response to a small local perturbation of a system in thermodynamic equilibrium in order to gain information about the unperturbed system. One of the simplest quantities to measure is the static response \cite{Kubo} to an infinitesimal perturbation $H\rightarrow H+\lambda B$ on the Hamiltonian by an observable $B$. The change in the equilibrium expectation value of an observable $A$ is determined by a function $\langle\langle A, B \rangle \rangle$ called Kubo's canonical correlation function. It is natural to measure the rate at which this function decays with respect to the spatial distance between $A$ and $B$. In the theoretical study of quantum many body systems, knowledge of this decay rate is important. For example, the assignment of a local temperature to a subsystem relies crucially on the exponential decay of canonical correlators \cite{kliesch2014locality}. Another example is the proof of the energy Bloch's theorem  \cite{kapustin2019absence} which assumes a fast enough decay rate as well. 

 A study of decay rates of canonical correlators in general quantum systems is, therefore, warranted. While there has been some progress for finite quantum spin system at high temperature (see \cite{kliesch2014locality}), our understanding of the topic is far from complete. On the other hand, a lot is known about ordinary correlation functions and their decay in thermal states. In 1969, Araki \cite{araki1969gibbs} proved the exponential and uniform clustering of ordinary correlators for one dimensional quantum spin chains at positive temperature. The clustering of ordinary correlators has been proved for certain higher dimensional spin systems, both classical \cite{bricmont1996high} and quantum \cite{NS}. It would be interesting to prove analogous results for canonical correlators. There could be two applications if such results exist. Firstly, it would justify the theoretical assumptions about rapid decay of canonical correlators away from phase transition. Secondly, near a continuous phase transition, ordinary correlators have power-like decay. The exponents of the decay rates are believed to be universal. A natural question is whether this translates into power-like decay with the same exponents for canonical correlators. The answer is affirmative for classical systems where ordinary and canonical correlators are exactly equal. It would be interesting to verify if this result holds also for quantum systems at positive temperature.

In this article we prove such results in the framework of infinite-volume quantum spin systems at positive temperature. We believe our methods can be adapted to many other discrete or continuous models. 
\\

\noindent\textbf{Acknowledgements:} I would like to express my gratitude towards Anton Kapustin for suggesting the problem and for his invaluable guidance and support. I also thank Nathaniel Sagman, Tamir Hemo and Alexandre Perozim de Faveri for discussions on decay rates.
\section{Setup and statement of the main result}
\subsection{Setup and notation}

We consider a quantum spin system defined on an infinite discrete metric space $(\Gamma, d)$. Let $P_0(\Gamma)$ be the set of finite subsets of $\Gamma.$ For a finite subset $X\in \mathcal P_0(\Gamma)$, we define $B_r(X) =\{y\in \Gamma: d(y,X)<r\}$. We also assume there is a constant $D>0$, such that for all $X \in \mathcal P_0(\Gamma)$ there exists $C>0$ with
\begin{equation} \label{polygrowth}
    |B_r(X)| \leq C|X|(1+r)^D.
\end{equation}
The algebra of local observables is 
\begin{equation}
    \A_{\text{loc}}= \varinjlim_{X\in \mathcal P_0(\Gamma)}\bigotimes_{x\in X}M_{d_x},
\end{equation}
where $M_{d_x}$ is the algebra of $d_x \times d_x$ matrices with $\sup_{x\in X}d_x<\infty$. We also denote $\bigotimes_{x\in X}M_{d_x}$ by $\A_X$ for all $X\in \mathcal{P}_0(\Gamma)$. The $C^*$-algebra $\A$ of quasi-local observables is the norm completion of $\A_{\text{loc}}$. For $A\in \A_{\text{loc}}$, its \textit{support} supp $A$ is the smallest $X\in \mathcal P_0(\Gamma)$ such that $A \in \A_X$. 

In a system of finite volume, time evolution is generated by a self-adjoint operator called the Hamiltonian. This is replaced by an interaction in an infinite-volume system. An interaction $\Phi$ is a function from $\mathcal P_0( \Gamma)$ to $\A$ such that $\Phi(X)=\Phi(X)^*\in \A_X$. In order to determine a time evolution on the infinite system, an interaction has to be reasonably local. Mathematically, this translates into a decay condition on $\lVert \Phi(X) \rVert$ as $X$ grows in size. Various viable conditions have been explored by others (see \cite{HK, NS, NS1, NS2}). For clarity we shall follow a single convention and assume for all $x\in \Gamma$ the following holds:
\begin{equation}\label{LRassumption}
    \sum_{Z\ni x}\norm{\Phi(Z)}|Z|\textnormal{exp}(\mu \textnormal{diam} (Z))\leq v/2<\infty,
\end{equation}
where diam$(Z)= \max_{y,z \in Z}d(y,z)$ for $Z\in \mathcal P_0(\Gamma)$.
Under this condition, we are able to determine an infinite-volume time evolution 
$\tau: \R \rightarrow \text{Aut}(\A)$ as  
 a strongly continuous one-parameter group of $C^*$-algebra automorphisms. Specifically, it is shown (see \cite{BR1, BR2, simon2014statistical, naaijkens2013quantum}) that for all $A$ in a dense subset of $\A$, the following limit is well-defined
 \begin{equation}
     \delta(A) = \lim_{|\Lambda|\rightarrow \infty}\sum_{X\subset \Lambda} [\Phi(X), A],
 \end{equation} where $\Lambda\in \mathcal P_0(\Gamma)$ grows to eventually include any point in $\Gamma$. Then $\tau_t$ is obtained on a dense subset by the exponentiation of $\delta$ through a Taylor series. The convergence of both $\delta$ and $\tau$ are shown in the references above. In particular, $\tau_t(A)$ is differentiable with respect to $t\in \R$ for all $A\in \A_{\text{loc}}$ (see lemma 3.3.5 in \cite{naaijkens2013quantum}). We will use this fact later.

When working with the algebra of observables (instead of the Hilbert space of pure states),  a state $\phi: \A \longrightarrow \C$ is a positive linear functional of norm 1, or equivalently $\phi$ is bounded with $\norm{\phi}=\phi(1)=1.$ This immediately implies that for any observable $A\in \A$, we have a bound $|\phi(A)|\leq \norm A.$

Given a time evolution $\tau$ and an inverse temperature $\beta \in [0, \infty]$, a state $\phi$ is called $\beta$-KMS \footnote{KMS stands for Kubo-Martin-Schwinger} if for all $A,B\in \A_{loc}$ there exists a function $F_{A,B}(t)$ holomorphic and bounded on the strip $S_\beta= \{z\in \C| 0\leq \mathrm{Im} z \leq \beta\}$  such that
\begin{equation}
    F_{A,B}(t) = \phi(A\tau_{t}(B))
\end{equation}
and 
\begin{equation}
    F_{A,B}(t+i\beta)= \phi(\tau_t(B)A),
\end{equation}
for all $t\in \R.$ The KMS condition implies that $\phi$ is invariant under the automorphisms $\tau$ \cite{BR2}.
 The KMS states are well studied as an infinite-volume thermal state at temperature $T=1/ \beta$. In particular, it has been shown \cite{powers1975existence} that such a state always exists in a quantum spin system for any $\beta$. More detailed exposition on the topic can be found in  \cite{BR2, simon2014statistical} . For a fixed $\beta$-KMS state the ordinary and Kubo's canonical correlators are respectively
\begin{equation}
    \langle A,B\rangle_\phi=\phi(AB)-\phi(A)\phi(B)
\end{equation}
 and
 \begin{equation}
     \langle \langle A,B\rangle\rangle_\phi=\frac{1}{\beta}\int_0^\beta F_{A,B}(ib)db-\phi(A)\phi(B).
 \end{equation}

\subsection{Main result}

Given a KMS state at positive temperatrue, we establish spatial decay for its canonical correlators assuming spatial decay for the ordinary correlators. Only two other ingredients are needed: approximate locality (see the next section) and the KMS condition. The large number of existing results on correlators decays often come in two forms: exponential clustering or power-like decay (see \cite{NS, NS1, NS2,araki1969gibbs, bricmont1996high, kliesch2014locality, W,}). We choose to use a general monotone decreasing function $g: \R_{\geq 0}\rightarrow \R_{\geq 0}$ as the decay rate instead. While this has the benefit of generality, it may appear opaque for applications. Thus, we also discuss our result for $g(l)= e^{-l/\xi}$ and $g(l)= l^{-n}$ with some $n, \xi >0$.

\begin{thm} \label{T2}
In a quantum spin system as defined above, let $\phi$ be a $\beta$-KMS state with $0 \leq\beta< \infty$. Given $X, Y\in \mathcal{ P}_0(\Gamma)$ with $l=d(X,Y)>0$, if for all $A\in \A_X$, $B\in \A_Y$, \begin{equation} \label{eq:minsupportclustering}
    |\langle A,B\rangle_\phi|\leq c\norm{A}\norm{B}\min\{|X|,|Y|\}g(l),
\end{equation}where $c$ is a constant,
then
 \begin{equation}
    |\langle \langle A,B\rangle\rangle_\phi|\leq c'\norm{A}\norm{B}\min\{|X|^2,|Y|^2\}g'(l),
\end{equation}
where $c'$ is another constant and $g'(l)$ is  $g(l/4)$ or $e^{-\mu l/2}$ whichever one has slower decay. In particular, if $g(l)= l^{-n}$ then $g'(l)= l^{-n}$. If $g(l)= e^{-l/\xi}$, then $g'(l) = e^{-l/\xi'}$ where $\xi'$ is another constant independent of $A$, $B$ and $l$. 
\end{thm}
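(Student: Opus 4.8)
The plan is to estimate the canonical correlator pointwise under the integral sign. Set $H(z):=F_{A,B}(z)-\phi(A)\phi(B)$; since $\beta\langle\langle A,B\rangle\rangle_\phi=\int_0^\beta H(ib)\,db$, it suffices to bound $|H(ib)|$ for $b\in[0,\beta]$ (the case $\beta=0$ being trivial, as then $\langle\langle A,B\rangle\rangle_\phi=\langle A,B\rangle_\phi$). The KMS condition makes $H$ holomorphic and bounded on the strip $S_\beta$, and because each boundary function of $H$ is bounded by $\|A\|\|B\|$, the maximum modulus principle gives $|H(z)|\le 2\|A\|\|B\|=:2M$ on $S_\beta$. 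At the two ends of the integration the integrand is an ordinary correlator: $H(0)=\langle A,B\rangle_\phi$ and, by $\tau$-invariance of $\phi$, $H(i\beta)=\phi(BA)-\phi(A)\phi(B)=\overline{\langle A^*,B^*\rangle_\phi}$, so \eqref{eq:minsupportclustering} (used also for $A^*,B^*$) controls both by $c\|A\|\|B\|\min\{|X|,|Y|\}\,g(l)$. Using $\langle\langle A,B\rangle\rangle_\phi=\langle\langle B,A\rangle\rangle_\phi$ we may assume $|X|\le|Y|$, so the goal becomes $|\langle\langle A,B\rangle\rangle_\phi|\le c'\|A\|\|B\||X|^2g'(l)$.

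The key step is to extend good bounds from the ends of the strip to a whole real window $|t|\le T_0:=l/(4v)$ on both lines $\mathrm{Im}\,z\in\{0,\beta\}$. For such $t$ the Lieb–Robinson / approximate-locality estimates of the next section furnish $B_t\in\A_{B_{3l/4}(Y)}$ with $\|\tau_t(B)-B_t\|\le C\|B\||Y|e^{-\mu l/2}$, since the propagation radius $v|t|\le l/4$ lies a distance $l/2$ inside the truncation radius $3l/4$. As $\mathrm{supp}\,B_t$ is at distance $\ge l/4$ from $X$ and $|X|\le|Y|\le|B_{3l/4}(Y)|$, the hypothesis gives $|\langle A,B_t\rangle_\phi|\le 2c\|A\|\|B\||X|g(l/4)$, whence, writing $H(t)=\langle A,\tau_t(B)\rangle_\phi=\langle A,B_t\rangle_\phi+\langle A,\tau_t(B)-B_t\rangle_\phi$ and estimating the remainder crudely,
\[
|H(t)|\ \le\ \eta:=C'\|A\|\|B\|\bigl(|X|\,g(l/4)+|Y|\,e^{-\mu l/2}\bigr)\qquad(|t|\le T_0),
\]
and the same bound holds on $\{\mathrm{Im}\,z=\beta\}$ after rewriting $H(t+i\beta)=\overline{\langle A^*,\tau_t(B^*)\rangle_\phi}$ (using $\phi\circ\tau_t=\phi$) and replacing $\tau_t(B^*)=(\tau_t(B))^*$ by $(B_t)^*\in\A_{B_{3l/4}(Y)}$. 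Here the hypothesis contributes one factor of $\min\{|X|,|Y|\}$, and carefully accounting for the source sizes in the locality estimate for the remainder—evolving whichever operator has the smaller support—should produce the second factor, i.e. $\min\{|X|^2,|Y|^2\}$ rather than $\min\{|X|,|Y|\}$.

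Next I pass to the interior point $ib$ by the two-constants theorem on the strip: the harmonic measure of $ib$ relative to $S_\beta$, restricted to either boundary line, has density $\le Ce^{-\pi|t|/\beta}$, so splitting it at $|t|=T_0$ and using $|H|\le\eta$ for $|t|\le T_0$ and $|H|\le 2M$ otherwise yields
\[
\log|H(ib)|\ \le\ (1-\gamma)\log\eta+\gamma\log(2M),\qquad \gamma\le Ce^{-\pi T_0/\beta}=Ce^{-\pi l/(4v\beta)},
\]
that is $|H(ib)|\le\eta\,(2M/\eta)^{\gamma}$. Since $\gamma$ is exponentially small in $l$ while $\log(2M/\eta)$ grows at most like $l$ (for exponential $g$) or like $\log l$ (for power-law $g$), the factor $(2M/\eta)^{\gamma}$ stays bounded by an absolute constant, so $|H(ib)|\le C''\eta$ for all $b$, and hence $|\langle\langle A,B\rangle\rangle_\phi|\le C''\eta$. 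Because $g(l/4)$ and $e^{-\mu l/2}$ are each $\le g'(l)$, this is $\lesssim\|A\|\|B\||X|^2g'(l)$; when $g(l)=l^{-n}$ the $e^{-\mu l/2}$ term is negligible and $g'(l)=l^{-n}$, while for $g(l)=e^{-l/\xi}$ one gets $g'(l)=e^{-l/\xi'}$ with $\xi'=\max\{4\xi,\,2/\mu\}$.

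The main obstacle is exactly the boundary estimate for $t\ne 0$. The clustering hypothesis is available only for strictly local operators, so bounding $|\phi(A\tau_t(B))-\phi(A)\phi(B)|$ forces one through the quantitative Lieb–Robinson machinery and requires controlling how the support of $\tau_t(B)$—and with it the effective $\min$-factor and the truncation error—spreads with $t$; this is where every loss in the statement originates (truncating at radius $3l/4$ forces $g(l)\rightsquigarrow g(l/4)$, the Lieb–Robinson tail sets the floor $e^{-\mu l/2}$, and the combinatorics of the truncation upgrades $\min\{|X|,|Y|\}$ to $\min\{|X|^2,|Y|^2\}$). By comparison the complex-analytic step is soft, but one must keep the window $T_0$ of order $l$—not $O(1)$—so that $\gamma\log(1/\eta)$ stays bounded and the decay exponent (power law) or decay rate (exponential) is not degraded.
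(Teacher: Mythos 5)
Your proposal is correct in its main thrust but takes a genuinely different complex-analytic route from the paper. The paper integrates $\frac{f(z)w(z)}{z-ib}$ around a rectangle, with a Gaussian weight $w(z)=e^{-z^2-b^2}$ chosen so that $w(ib)=1$ and so that the resulting real-line integrals decay rapidly for large $|t|$; it then bounds the integrals piecewise, and needs an explicit bound on $\frac{d}{dt}B^{3l/4}(t)$ together with a mean-value argument to control the would-be singularity of $\frac{1}{t-ib}$ near $t=0$ uniformly as $b\to 0$. You instead bound $H(z)=F_{A,B}(z)-\phi(A)\phi(B)$ on the two boundary lines of the strip for $|t|\le T_0$ via the localization $B_t$, and propagate that bound to the interior point $ib$ by the two-constants (harmonic-measure) theorem. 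This sidesteps the Gaussian weight, the residue-theorem bookkeeping, the $\arctan$ estimate, and the derivative bound entirely, and the interpolation step is a soft, robust replacement for the paper's quantitative contour analysis.

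Two corrections are needed. First, the window $T_0=l/(4v)$ is too large: the approximate-locality error at truncation radius $r=3l/4$ scales like $e^{-2\mu r}(e^{v|t|}-1)$, so with $v|t|$ as large as $l/4$ you get $e^{-3\mu l/2+l/4}$, which decays only if $\mu$ exceeds a threshold. The correct choice is $T_0$ proportional to $\mu l/v$ (e.g.\ $T_0=\mu l/(2v)$ as the paper uses), giving a localization error $\lesssim e^{-\mu l}$ while $\gamma\le Ce^{-\pi T_0/\beta}$ is still exponentially small in $l$; the final interpolation is unaffected since $\eta\gtrsim e^{-\mu l}$ forces $\log(2M/\eta)\lesssim l$ in every case. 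Second, your speculation that ``carefully accounting for source sizes\ldots should produce the second factor $\min\{|X|^2,|Y|^2\}$'' is misplaced: localizing whichever of $A$, $B$ has the smaller support (and using $\tau$-invariance of $\phi$ to write $H(t)=\langle \tau_{-t}(A),B\rangle_\phi$ when $|X|\le|Y|$) gives $\eta\lesssim\|A\|\|B\|\min\{|X|,|Y|\}\bigl(g(l/4)+e^{-\mu l}\bigr)$, a single power of $\min\{|X|,|Y|\}$. The paper's extra factor of $|Y|$ (hence the square) arises precisely from bounding $\lVert\frac{d}{dt}B^{3l/4}(t)\rVert\lesssim\|B\||Y|$ in the near-origin estimate, a step your method never requires. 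So if carried out carefully your route actually delivers a slightly sharper conclusion than the theorem as stated.
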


\section{Approximate Locality}
Locality in a physical system broadly refers to a limit on the propagation speed of information. In relativistic systems, no information is allowed to travel faster than the speed of light. However, non-relativistic systems often do not possess such a sharp bound. Nevertheless, an approximate version of locality can emerge in a multitude of many body systems. In this section, we first define approximate locality for a quantum spin system. Then we state the Lieb-Robinson bounds which guarantees approximate locality for the spin systems considered in this article.

\begin{defn}
In a quantum spin system with time evolution $\tau$, we say the system is \textit{approximately local} if for any $A\in \A_X$ and $r>0$, there exists a family of operators $A^r(t) \in \A_{B_r(X)}$ differentiable with respect to $t \in \R$, such that $A^r(0)=A$ and
\begin{equation} \label{ApproxLoc}
    \norm{\tau_t(A)-A^r(t)}\leq c\norm{A}|X| e^{-2\mu r}(e^{v|t|}-1),
\end{equation}
where $c, \mu, v$ are positive absolute constants. 
\end{defn}

\begin{rmk}
The condition can be interpreted as a bound on information leaked out of an effective light-cone. In relativistic field theory, the condition above holds without any leakage.  
\end{rmk}

 It turns out a large class of non-relativistict quantum spin systems possess approximate locality. This fact was originally observed by Lieb and Robinson in \cite{LiebRobinson} and become known as the Lieb-Robinson bounds. There exists many improved versions of this result for example in \cite{H, HK, NS2, BHV} just to name a few. Below, we state a version discussed in \cite{H}.

\begin{thm} \label{LiebRobinson}
 In a quantum spin system on $(\Gamma,d)$, assume the interaction $\Phi$ is such that for all $x\in \Gamma$, the following holds:
\begin{equation} \label{LRassump}
    \sum_{Z\ni x}\norm{\Phi(Z)}|Z|\textnormal{exp}(\mu \textnormal{diam} (Z))\leq v/2<\infty,
\end{equation} 
for some positive constants $c, \mu$ and $v$. Let $A,B$ be operators supported on sets $X,Y$, respectively. Then, if $l=d(X,Y)>0$,
\begin{align}
    \norm{[B,\tau_t(A)]}\leq c\norm A \norm B \min(|X|, |Y|) e^{-\mu l}(e^{v|t|}-1). \label{lr}
\end{align}

\end{thm}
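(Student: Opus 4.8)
The plan is the standard two-stage strategy for Lieb--Robinson estimates: first prove the bound at finite volume with constants independent of the volume, using only the Jacobi identity and a unitary conjugation, and then pass to the infinite-volume limit. Fix $A\in\A_X$, $B\in\A_Y$; since $l=d(X,Y)>0$ the sets $X,Y$ are disjoint, so $[A,B]=0$. Choose an increasing exhaustion $\Lambda\nearrow\Gamma$, set $H_\Lambda=\sum_{Z\subset\Lambda}\Phi(Z)$ and $\tau_t^\Lambda(C)=e^{itH_\Lambda}Ce^{-itH_\Lambda}$. I will prove \eqref{lr} with $\tau$ replaced by $\tau^\Lambda$ and a constant not depending on $\Lambda$; since $\tau_t^\Lambda(A)\to\tau_t(A)$ in norm, uniformly for $t$ in compact intervals (this is part of the construction of $\tau$ recalled above and in the cited references), letting $\Lambda\nearrow\Gamma$ and then approximating general quasi-local $A,B$ by local ones yields \eqref{lr} in full.

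First, I set up the basic integral inequality. Put $f(t)=[\tau_t^\Lambda(A),B]$. Differentiating and applying the Jacobi identity gives
\begin{equation}
f'(t)=i\big[[H_\Lambda,\tau_t^\Lambda(A)],B\big]=i\big[H_\Lambda,f(t)\big]+i\big[[H_\Lambda,B],\tau_t^\Lambda(A)\big].
\end{equation}
Conjugating by $e^{-itH_\Lambda}$ removes the first term on the right; since conjugation by a unitary preserves the norm and $f(0)=[A,B]=0$, this gives
\begin{equation}
\|f(t)\|\le\int_0^{|t|}\big\|\big[[H_\Lambda,B],\tau_s^\Lambda(A)\big]\big\|\,ds.
\end{equation}
Expanding $[H_\Lambda,B]=\sum_{Z\cap Y\neq\emptyset}[\Phi(Z),B]$, using $\|[\Phi(Z),B]\|\le 2\|\Phi(Z)\|\|B\|$, and writing $C_A^\Lambda(W,t):=\sup\{\|[\tau_t^\Lambda(A),D]\|:D\in\A_W,\ \|D\|\le1\}$, one obtains for every finite $W$
\begin{equation}
C_A^\Lambda(W,t)\le C_A^\Lambda(W,0)+2\int_0^{|t|}\sum_{Z\cap W\neq\emptyset}\|\Phi(Z)\|\,C_A^\Lambda(W\cup Z,s)\,ds,
\end{equation}
where $C_A^\Lambda(W,0)\le 2\|A\|$ always, and $C_A^\Lambda(W,0)=0$ whenever $W\cap X=\emptyset$.

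Next, I iterate this inequality. The nested time integrals produce factors $|t|^k/k!$ and one is left with a sum over ``chains'' $Z_1,\dots,Z_k$ satisfying $Z_1\cap Y\neq\emptyset$ and $Z_{j+1}\cap(Y\cup Z_1\cup\cdots\cup Z_j)\neq\emptyset$, the terminal factor being $C_A^\Lambda(Y\cup Z_1\cup\cdots\cup Z_k,0)$, which vanishes unless the chain reaches $X$; the remainder after $k$ steps vanishes as $k\to\infty$ by the chain bound invoked below. Thus
\begin{equation}
C_A^\Lambda(Y,t)\le 2\|A\|\sum_{k\ge1}\frac{(2|t|)^k}{k!}\sum_{\substack{Z_1,\dots,Z_k\ \text{chain from }Y\\ (\cup_iZ_i)\cap X\neq\emptyset}}\|\Phi(Z_1)\|\cdots\|\Phi(Z_k)\|.
\end{equation}
A chain reaching $X$ contains a sub-collection with consecutive overlaps joining a point of $Y$ to a point of $X$, whence $\sum_i\operatorname{diam}(Z_i)\ge d(X,Y)=l$, so $\prod_i\|\Phi(Z_i)\|\le e^{-\mu l}\prod_i\|\Phi(Z_i)\|e^{\mu\operatorname{diam}(Z_i)}$. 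Summing the remaining weighted chains is where \eqref{LRassump} is used: attaching a new set $Z_{j+1}$ to the currently reached region $R_j$ costs $\sum_{u\in R_j}\sum_{Z\ni u}\|\Phi(Z)\|e^{\mu\operatorname{diam}(Z)}$, and the extra factor $|Z|$ in \eqref{LRassump} is exactly what pays for the growth of $|R_j|$; a reproduction/amortization bookkeeping of these sums, together with the polynomial-growth condition \eqref{polygrowth} to control $\sum_{x\in X,\,y\in Y}$ against $\min(|X|,|Y|)$, bounds the length-$k$ chains by $C|Y|(v/2)^k$ (possibly after shrinking $\mu$ slightly in the factor $e^{-\mu l}$ for the cleanest reproduction kernel). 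Summing over $k$ gives $C_A^\Lambda(Y,t)\le c\|A\|\,|Y|\,e^{-\mu l}(e^{v|t|}-1)$.

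Finally, applying this with $B$ of unit norm gives $\|[\tau_t^\Lambda(A),B]\|\le c\|A\|\|B\|\,|Y|\,e^{-\mu l}(e^{v|t|}-1)$; since $\|[\tau_t^\Lambda(A),B]\|=\|\tau_{-t}^\Lambda([\tau_t^\Lambda(A),B])\|=\|[A,\tau_{-t}^\Lambda(B)]\|$, rerunning the argument with the roles of $A,B$ (hence $X,Y$) interchanged replaces $|Y|$ by $|X|$, and taking the smaller of the two produces the factor $\min(|X|,|Y|)$; passing $\Lambda\nearrow\Gamma$ and approximating finishes the proof. The routine parts are the Jacobi/conjugation manipulation and the symmetrization/limit; the real obstacle is the chain-summation, namely organizing the iterated sums so that the $|Z|$ weight in \eqref{LRassump} precisely absorbs the growth of the reached region, invoking a reproduction-type inequality (this is where \eqref{polygrowth} enters) to keep the prefactor at $\min(|X|,|Y|)$ rather than $|X|\,|Y|$, and verifying that the resulting series converges so that the iteration remainder is indeed negligible.
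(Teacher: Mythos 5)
You should note at the outset that the paper does not prove Theorem \ref{LiebRobinson} at all: it is quoted as a known result ``discussed in \cite{H}'', so the only meaningful comparison is with the standard literature proof. Your outline follows that standard strategy in broad strokes (finite-volume differential inequality via the Jacobi identity, iteration into a sum over chains of interaction terms, extraction of $e^{-\mu l}$ from $\sum_i \mathrm{diam}(Z_i)\ge l$, symmetrization in $A,B$ to get $\min(|X|,|Y|)$, then the thermodynamic limit), and those framing steps are fine.

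The genuine gap is precisely the step you flag as ``the real obstacle'': the chain summation is asserted, not carried out, and with the recursion you actually set up the asserted bound is false. Because you iterate on $C_A^\Lambda(W\cup Z,s)$, your chains only need $Z_{j+1}\cap(Y\cup Z_1\cup\cdots\cup Z_j)\neq\emptyset$, so every new set may re-attach to $Y$ itself. Writing $w(Z)=\norm{\Phi(Z)}e^{\mu\,\mathrm{diam}(Z)}$, already at length two the per-step union bound gives $\sum_{Z_1\cap Y\neq\emptyset}w(Z_1)\,(|Y|+|Z_1|)\,(v/2)\le (v/2)^2\,|Y|(|Y|+1)$, and in general one only gets something of order $(|Y|\,v/2)^k$, not the claimed $C|Y|(v/2)^k$; summing against $(2|t|)^k/k!$ then puts $|Y|$ into the exponent, i.e.\ $e^{c|Y||t|}$ instead of $e^{v|t|}$, which is not \eqref{lr}. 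The standard cure is to organize the iteration so that each step's commutator involves only the newly attached interaction term, i.e.\ recurse on $C(Z,s)$ rather than $C(W\cup Z,s)$, producing consecutive-overlap chains $Z_{j+1}\cap Z_j\neq\emptyset$; then the factor $|Z|$ in \eqref{LRassump} pays exactly for the union bound over the points of the previous set, the first step contributes the single factor $|Y|$ (or $|X|$ when the roles are swapped), the exponent $\mu$ is not degraded (your parenthetical ``shrinking $\mu$ slightly'' would weaken the stated bound), and the growth condition \eqref{polygrowth} is not needed anywhere --- the $\min(|X|,|Y|)$ comes solely from the symmetrization you already describe. As written, then, the proposal is a correct skeleton of the known proof with its combinatorial core missing and, as set up, not repairable without reorganizing the iteration in this way.
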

\begin{prop}
A spin system satisfying the Lieb-Robinson bounds possesses approximate locality and vice versa.
\end{prop}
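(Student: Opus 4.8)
The plan is to build the local approximant $A^r(t)$ as the normalized partial trace of $\tau_t(A)$ onto $\A_{B_r(X)}$, and to reduce each implication to the elementary fact that ``a commutator of $\tau_t(A)$ with an operator far from $X$ is small'' says exactly the same thing as ``$\tau_t(A)$ is close to its restriction onto a neighbourhood of $X$.''

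\textbf{Lieb--Robinson bound $\Rightarrow$ approximate locality.} For $A\in\A_X$ and $r>0$ let $\Pi_{B_r(X)}\colon\A\to\A_{B_r(X)}$ be the conditional expectation, i.e.\ the norm-one projection onto $\A_{B_r(X)}$ extending $\mathrm{id}\otimes\mathrm{tr}^{\mathrm{norm}}$ on finite subsystems, and set $A^r(t):=\Pi_{B_r(X)}\bigl(\tau_t(A)\bigr)$. Then $A^r(t)\in\A_{B_r(X)}$ by construction; $A^r(0)=\Pi_{B_r(X)}(A)=A$ since $X\subseteq B_r(X)$ for $r>0$; and $t\mapsto A^r(t)$ is differentiable, because $t\mapsto\tau_t(A)$ is (as recalled after the construction of $\tau$) and $\Pi_{B_r(X)}$ is bounded and linear. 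For the norm estimate I would use the standard fact
\begin{equation*}
\norm{C-\Pi_\Lambda(C)}\le\sup\{\norm{[C,U]}:U\in\A_{\Lambda^c}\ \text{a finitely supported unitary}\},\qquad C\in\A,
\end{equation*}
proved by writing $\Pi_\Lambda$ as a Haar twirl $C\mapsto\int UCU^*\,dU$ over the unitary group of $\A_{\Lambda'\setminus\Lambda}$ when $C\in\A_{\Lambda'}$ is local, using $\norm{C-UCU^*}=\norm{[C,U]}$, and then approximating a general $C$ in norm by local operators (with $\norm{\Pi_\Lambda}=1$). Applied with $C=\tau_t(A)$ and $\Lambda=B_r(X)$: every such $U$ has support $S$ with $d(X,S)\ge r$, so \eqref{lr} gives $\norm{[\tau_t(A),U]}\le c\norm A|X|e^{-\mu r}(e^{v|t|}-1)$, whence $\norm{\tau_t(A)-A^r(t)}\le c\norm A|X|e^{-\mu r}(e^{v|t|}-1)$. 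This is \eqref{ApproxLoc}, the Lieb--Robinson rate $\mu$ playing the role of $2\mu$ in the definition (equivalently, approximate locality holds with rate $\mu/2$).

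\textbf{Approximate locality $\Rightarrow$ Lieb--Robinson bound.} Let $A\in\A_X$, $B\in\A_Y$ with $l=d(X,Y)>0$, and apply approximate locality to $A$ with radius $r=l$. Every $y\in Y$ satisfies $d(y,X)\ge l$, so $Y\cap B_l(X)=\varnothing$, hence $A^l(t)$ and $B$ have disjoint supports and commute; therefore
\begin{equation*}
\norm{[B,\tau_t(A)]}=\norm{[B,\tau_t(A)-A^l(t)]}\le 2\norm B\,\norm{\tau_t(A)-A^l(t)}\le 2c\norm A\norm B|X|e^{-2\mu l}(e^{v|t|}-1).
\end{equation*}
Running the same argument with $A$ and $B$ exchanged --- using $\norm{[B,\tau_t(A)]}=\norm{\tau_{-t}[B,\tau_t(A)]}=\norm{[\tau_{-t}(B),A]}$ and approximating $\tau_{-t}(B)$ by an element of $\A_{B_l(Y)}$, which is disjoint from $X$ --- gives the same bound with $|Y|$ in place of $|X|$. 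Taking the smaller of the two produces $\norm{[B,\tau_t(A)]}\le 2c\norm A\norm B\min(|X|,|Y|)e^{-2\mu l}(e^{v|t|}-1)$, which is \eqref{lr} (with rate $2\mu\ge\mu$ and constant $2c$).

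\textbf{Main obstacle.} The only genuine work is the ``standard fact'' used in the first implication --- realizing the conditional expectation as a unitary twirl and controlling the approximation inside the infinite-volume $C^*$-algebra $\A$, where $\Lambda^c=\Gamma\setminus B_r(X)$ is infinite and carries no Haar measure. I would handle this by truncating to finite subsystems, using norm-density of $\A_{\mathrm{loc}}$ together with $\norm{\Pi_\Lambda}=1$; none of the quantitative Lieb--Robinson content enters there, only its bookkeeping (the factors $|X|$, $e^{-\mu r}$, and $e^{v|t|}-1$). Everything else --- differentiability, $A^r(0)=A$, and the vanishing of both sides at $t=0$ --- is immediate.
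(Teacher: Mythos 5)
Your proof is correct and follows essentially the same route as the paper: both realize the local approximant $A^r(t)$ as a Haar twirl (equivalently, the tracial conditional expectation) of $\tau_t(A)$ onto $\A_{B_r(X)}$, control $\norm{\tau_t(A)-A^r(t)}$ by commutators with far-away unitaries via the Lieb--Robinson bound, and recover the bound from approximate locality by choosing $r$ so that $A^r(t)$ commutes with $B$. Your explicit use of $\norm{[B,\tau_t(A)]}=\norm{[\tau_{-t}(B),A]}$ to justify the $\min(|X|,|Y|)$ factor is actually slightly more careful than the paper's ``without loss of generality'' remark.
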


\begin{proof}
($\Rightarrow$): Given quasi-local $A\in \A$ and $X\in \mathcal P_0(\Gamma)$, we define for an integer $n > 0$
\begin{equation} 
    \T_X^n(A)=\int_{B_n(X)\backslash X} dU UAU^\dagger,
\end{equation}
where the integral is over all unitary matrices of the form $U=\bigotimes_{x\in B_n(X)\backslash X} U_x$ with $U_x\in \A_{\{x\}}$ using the Haar measure. We also assume the Haar measure in the integral is normalized so that 
\begin{equation}
    \lVert \T_X^n(A) \rVert \leq \int_{B_n(X)\backslash X} dU \lVert UAU^\dagger\rVert=\norm{A}\int_{B_n(X)\backslash X} dU= \lVert A\rVert.
\end{equation}
Moreover, $\{\T_X^n(A)\}_n$ is a Cauchy sequence: Given $\varepsilon>0$, there exists $T \in \A_{\mathrm{loc}}$ such that  $\norm{T-A} <\varepsilon/2$. For any large $m, n$ such that supp $T \subset B_n(X)$ and $m > n,$
\begin{align} 
    &\lVert \T_X^m(A)-\T_X^n(A)\rVert\notag
    \\
    \leq & \int_{B_n(X)\backslash X} dU \lVert U(\T_{B_n(X)}^{m-n}(A)-A )U^\dagger \rVert \notag 
    \\
    = & \norm{\int_{B_m(X)\backslash B_n(X)} dW WAW^\dagger-T+T -A} \notag
    \\
    \leq & \norm{\int_{B_m(X)\backslash B_n(X)} dW W(A-T)W^\dagger} + \norm{T-A} < \varepsilon, 
\end{align}
where the last step relies on the fact $[T,W] = 0$ for all $W \in \A_{B_m(X)\backslash B_n(X)}$. Let $\T_X(A)= \lim_{n\rightarrow \infty}\T_X^n(A)$. It is easy to see that
\begin{equation} \label{bound}
    \lVert \T_X(A) \rVert \leq \lVert A\rVert.
\end{equation}
Thus $\T_X$ is a bounded linear operator on $\A$. 

We are now ready to show approximate locality. Given $A\in \A_X$ and $r>0$, we define 
\begin{equation} 
    A^r(t)=\T_{B_r(X)}(\tau_t(A)),
\end{equation}
and for $n > r$,
\begin{equation} 
    A^r_n(t)=\T^{n-r}_{B_r(X)}(\tau_t(A)),
\end{equation} Clearly $A^r(0) = A$.
From (\ref{bound}),
\begin{equation} 
    \lVert A^r(t) \rVert \leq  \lVert A\rVert.
\end{equation}
Since $U\tau_t(A)U^\dagger=\tau_t(A)+U[\tau_t(A),U^\dagger]$, theorem \ref{LiebRobinson} implies
\begin{equation} \label{preapproxlocal}
  \norm{\tau_t(A)-A^r_n(t)}\leq \int_{B_n(X)\backslash B_r(X)} dU\norm{[U^\dagger,\tau_t(A)]} \leq c\lVert A\rVert e^{-\mu r}(e^{v|t|}-1). 
\end{equation}
Let $n\rightarrow \infty$ on both sides, we have
\begin{equation} 
  \norm{\tau_t(A)-A^r(t)} \leq c\lVert A\rVert e^{-\mu r}(e^{v|t|}-1). 
\end{equation}
As $\tau_t(A)$ is differentiable with respect to $t$ for local $A\in \A_X$ and $\T_{B_r(X)}$ is bounded and linear, $A^r(t)=\T_{B_r(X)}(\tau_t(A))$ is also differentiable.

($\Leftarrow$): Given $A\in \A_X, B\in \A_Y$ with $l=d(X,Y)>0,$ we can take $r=l/2$ and deduce from approximate locality the following:
\begin{align}
    \norm{[B,\tau_t(A)]}\leq \norm{[B,\tau_t (A)-A^r(t)]}\leq c \norm{A} \norm{B} |Y| e^{-\mu l}(e^{v|t|}-1), 
\end{align} 
where we assume $\min(|X|,|Y|)=|Y|$ without loss of generality.
In the first step, we used the fact $[B,A^r(t)]=0$ for $r=l/2.$ 
\end{proof}

\section{Proof of Main Result}

By the KMS condition, $f(z):=F_{A,B}(z)$ is bounded and analytic on $S_\beta$.  
At positive temperature $\beta$ is finite, so the interval $[0,\beta]$ is compact. It is, therefore, sufficient to establish the desired decay for the integrand  $f(ib)-\phi(A)\phi(B)$. As long as the bound is uniform in $b \in [0, \beta]$, it survives after the integration. This approach would not apply to system at zero temperature where $\beta = \infty$.   

\begin{tikzpicture}[decoration={markings,
    mark=at position 2cm   with {\arrowreversed[line width=1pt]{stealth}},
    mark=at position 6cm   with {\arrowreversed[line width=1pt]{stealth}},
    mark=at position 10cm with {\arrowreversed[line width=1pt]{stealth}},
    mark=at position -2cm with {\arrowreversed[line width=1pt]{stealth}}
  }]
    
  \draw[thick, ->] (-6,0) -- (6,0) coordinate (xaxis);

  \draw[thick, ->] (0,0) -- (0,6) coordinate (yaxis);

  \node[above] at (xaxis) {$x=\mathrm{Re}(z)$};

  \node[right]  at (yaxis) {$y=\mathrm{Im}(z)$};

  \path[draw,blue, line width=0.8pt, postaction=decorate] 
        (-4,4)
    --  (4, 4)  node[midway, above right, black] {$i\beta$} 
    --  (4, 0)  node[midway, right, black] {} node[below, black] {$T$}
    --  (-4,0)  node[midway, below, black] {$O$} node[below, black] {$-T$} 
    --  (-4,4)  node[midway, left, black] {};
\end{tikzpicture}

\subsection{Contour Integration}

Fix $b \in [0, \beta]$ and let $w(z)=e^{-z^2-b^2}$. The function $w$ is bounded and analytic on the strip $S_\beta.$ Its exponent is conveniently chosen so that $w(ib) =1$.  
We integrate the function $\frac{f(z)w(z)}{z-ib}$ along the contour $\Gamma_T$ shown on the figure above. By the residue theorem,
\begin{align}
    2\pi i f(ib)w(ib)=&\int_{\Gamma_T} \frac{f(z)w(z)}{(z-ib)}dz\notag
    \\
    =&\int_{-T}^T \frac{f(t)w(t)}{t-ib}dx+\int_0^\beta\frac{f(T+iy)w(T+iy)}{T+iy-ib}dy\notag
    \\
    -&\int_{-T}^T\frac{f(t+i\beta)w(t+i\beta)}{t+i\beta-ib}dt-\int_0^\beta\frac{f(-T+iy)w(-T+iy)}{-T+iy-ib}dy. 
\end{align}
$f(z)$ and $w(z)$ are bounded on $S_\beta$, therefore, the 2nd and 4th terms vanish as $T\rightarrow \infty$.
\begin{align}
    f(ib)=&\frac{1}{2\pi i}\bigg(\int_{-\infty}^\infty \frac{f(t)w(t)}{t-ib}dt-\int_{-\infty}^\infty \frac{f(t+i\beta)w(t+i\beta)}{t+i\beta-ib}dt\bigg)\notag
    \\
    =&\frac{1}{2\pi i}\int_{-\infty}^\infty \frac{w(t)f(t)(t+i\beta-ib)-w(t+i\beta)f(t+i\beta)(t-ib)}{(t-ib)(t+i\beta-ib)}dt\notag
    \\
    =&\frac{1}{2\pi i}\int_{-\infty}^\infty \frac{w(t+i\beta)(f(t)-f(t+i\beta))}{t+i\beta-ib}dt\notag
    \\&+\frac{1}{2\pi i}\int_{-\infty}^\infty \frac{w(t)f(t)i\beta+(w(t)-w(t+i\beta))f(t)(t-ib)}{(t-ib)(t+i\beta-ib)}dt\notag
    \\
    =&\frac{1}{2\pi i}\int_{-\infty}^\infty \frac{w(t+i\beta)\phi([A,\tau_t(B)])}{t+i\beta-ib}dt\notag
    \\
    &+\frac{1}{2\pi i}\int_{-\infty}^\infty \frac{w(t)f(t)i\beta+(w(t)-w(t+i\beta))f(t)(t-ib)}{(t-ib)(t+i\beta-ib)}dt,
\end{align}
where in the last equality we used the fact $f(t+i\beta)=\phi(\tau_{t}(B)A).$ 
If we substitute $A=I$, the above identity reduces to 
\begin{equation}
    \frac{1}{2\pi i}\int_{-\infty}^\infty \frac{w(t)\phi(B)i\beta+(w(t)-w(t+i\beta))\phi(B)(t-ib)}{(t-ib)(t+i\beta-ib)}dt=\phi(B),
\end{equation}
or
\begin{equation}
    \frac{1}{2\pi i}\int_{-\infty}^\infty \frac{w(t)i\beta+(w(t)-w(t+i\beta))(t-ib)}{(t-ib)(t+i\beta-ib)}dt=1,
\end{equation}
This identity can be verified more directly using the residue theorem bearing in mind that $w(ib)=1$. With this identity we write

\begin{align}
    &f(ib)-\phi( A )\phi( B )\notag
    \\
    =&\frac{1}{2\pi i}\int_{-\infty}^\infty \frac{w(t+i\beta)\phi([A,\tau_t(B)])}{t+i\beta-ib}dt\notag
    \\
    &+\frac{1}{2\pi i}\int_{-\infty}^\infty \frac{w(t)(f(t)-\phi(A)\phi(B))i\beta+(w(t)-w(t+i\beta))(f(t)-\phi(A)\phi(B))(t-ib)}{(t-ib)(t+i\beta-ib)}dt\notag
    \\
=&\frac{1}{2\pi i}\int_{-\infty}^\infty \frac{w(t+i\beta)\phi([A,\tau_t(B)])}{t+i\beta-ib}dt\notag
\\
&+\frac{1}{2\pi i}\int_{-\infty}^\infty \frac{\langle A,\tau_t(B)\rangle_\phi(w(t) i\beta+(w(t)-w(t+i\beta))(t-ib))}{(t-ib)(t+i\beta-ib)}dt.
\end{align}

Therefore, 

\begin{align}
 2\pi\lvert&\phi( \tau_{-ib}(A)B )-\phi( A )\phi( B )\rvert\notag
 \\
\leq & \bigg|\int_{-\infty}^\infty \frac{w(t+i\beta)\phi([A,\tau_t(B)])}{t+i\beta-ib}dt\bigg|\notag
\\
&+\bigg|\int_{-\infty}^\infty \frac{\langle A,\tau_t(B)\rangle_\phi w(t)}{(t-ib)}dt\bigg|\notag
\\
&+\bigg|\int_{-\infty}^\infty \frac{\langle A,\tau_t(B)\rangle_\phi w(t+i\beta)}{(t+i\beta-ib)}dt\bigg|.
\end{align}
We proceed to bound each term separately. Throughout the next sections, we absorb all constant coefficients into a symbol $c$. Thus $c$ may change from step to step.

\subsection{First Term}
The decay for this term relies completely on approximate locality:
\begin{align}
&\bigg|\int_{-\infty}^\infty \frac{w(t+i\beta)\phi([A,\tau_t(B)])}{t+i\beta-ib}dt\bigg|\notag
\\
\leq & \int_{-\mu l/2v}^{\mu l/2v}\frac{|w(t+i\beta)|\norm{[A,\tau_t(B)]}}{|t+i\beta-ib|}dt\notag
\\
&+ \int_{|t|>\mu l/2v}\frac{|w(t+i\beta)|\norm{[A,\tau_t(B)]}}{|t+i\beta-ib|}dt\notag
\\
\leq &  \int_{-\mu l/2v}^{\mu l/2v}\frac{|w(t+i\beta)|\norm{[A,\tau_t(B)]}}{|t|}dt\notag
\\
&+2\norm{A}\norm{B} \int_{|t|>\mu l/2v}\frac{|w(t+i\beta)|}{|t|}dt
\end{align}
Finally we substitute $|w(t+i\beta)|=e^{-t^2+\beta^2-b^2}$ and then use approximate locality (or the Lieb-Robinson bounds) to show exponential decay:
\begin{align}
    &e^{\beta^2-b^2}\int_{-\mu l/2v}^{\mu l/2v}\frac{\norm{[A,\tau_t(B)]}}{|t|}e^{-t^2}dt\notag
    \\
    \leq& c\norm{A}\norm{B}\min(|X|,|Y|)e^{\beta^2-\mu l}\int_{0}^{\mu l/2v} \frac{e^{vt}-1}{t}e^{-t^2}dt\notag
    \\
    \leq& c\norm{A}\norm{B}\min(|X|,|Y|)e^{\beta^2-\mu l}\bigg(\int_{0}^{1} \frac{e^{vt}-1}{t}e^{-t^2}dt+\int_{1}^{\mu l/2v} e^{vt}dt\bigg)\notag
    \\
    =&c\norm{A}\norm{B}\min(|X|,|Y|)e^{\beta^2-\mu l}\bigg(\int_{0}^{1} \frac{e^{vt}-1}{t}e^{-t^2}dt+\frac{e^{\mu l/2}}{v}+\frac{e^v}{v}\bigg)\notag
    \\
    \leq& c \norm{A}\norm{B}\min(|X|,|Y|)e^{-\mu l/2}.
\end{align} 
The second integral becomes
\begin{align}
    &4\norm{A}\norm{B} \int_{t>\mu l/2v}\frac{e^{\beta^2-b^2-t^2}}{t}dt \notag
    \\
    \leq & c\norm{A}\norm{B}\int_{t>\mu l/2v}\frac{e^{-t^2}}{t}dt .
\end{align}
This decays exponentially in $l^2$ which is faster than the previous integral, so overall
\begin{equation}
    \bigg|\int_{-\infty}^\infty \frac{w(t+i\beta)\phi([A,\tau_t(B)])}{t+i\beta-ib}dt\bigg| \leq c \norm{A}\norm{B}\min(|X|,|Y|)e^{-\mu l/2}.
\end{equation}
Note this term decays exponentially with $l$ regardless of the rate of decay assumed for the ordinary correlator. The bound obtained is manifestly independent of $b$.
\subsection{Second and Third Terms}
Bounding the second and third terms similar, so we only provide the detail of the former:
\begin{align}
    &\bigg|\int_{-\infty}^\infty \frac{\langle A,\tau_t(B)\rangle_\phi w(t)}{t-ib}dt\bigg|\notag
    \\
    \leq &\bigg|\int_{-\mu l/2v}^{\mu l/2v}\frac{\langle A,\tau_t(B)\rangle_\phi w(t)}{t-ib}dt\bigg|\notag
    \\
    &+2\norm A\norm B\int_{|t|>\mu l/2v}\frac{|w(t)|}{|t|}dt.
\end{align}
We have used the simple bound $|\langle A,\tau_t(B)\rangle_\phi|=|\phi(A\tau_t(B))-\phi(A)\phi(B)|\leq \norm{A\tau_t(B)}+\norm A \norm B\leq 2\norm A\norm B$ for the last step. 

Since $w(t)=e^{-t^2-b^2}$, the second integral clearly decays exponentially with respect to $l^2$.

 By approximate locality, let $B^{3l/4}(t)$ be the local approximation to $\tau_t (B)$. Then
\begin{align}
    &\bigg|\int_{-\mu l/2v}^{\mu l/2v} \frac{w(t)}{t-ib}\langle A,\tau_t(B)\rangle_\phi dt\bigg|\notag
    \\
    = &\bigg| \int_{-\mu l/2v}^{\mu l/2v}\frac{w(t)}{t-ib}\big(\langle A,B^{3l/4}(t)\rangle_\phi+\phi(A(\tau_t(B)-B^{3l/4}(t)))-\phi(A)\phi((\tau_t(B)-B^{3l/4}(t)))\big)dt\bigg|\notag
    \\
    \leq &\bigg|\int_{-\mu l/2v}^{\mu l/2v}\frac{w(t)}{t-ib}\langle A,B^{3l/4}(t)\rangle_\phi dt\bigg|+2\norm{A}\int_{-\mu l/2v}^{\mu l/2v}\frac{|w(t)|}{|t-ib|}\norm{\tau_t(B)-B^{3l/4}(t)}dt.
\end{align}
By approximate locality, the second integral is bounded as follows,
\begin{align}
    &2\norm{A}\int_{-\mu l/2v}^{\mu l/2v}\frac{|w(t)|}{|t-ib|}\norm{\tau_t(B)-B^{3l/4}(t)}dt\notag
    \\
    \leq &2\norm{A}\int_{-\mu l/2v}^{\mu l/2v}\frac{e^{-t^2}}{|t|}\norm{\tau_t(B)-B^{3l/4}(t)}dt\notag
    \\
    \leq&c\norm{A}\norm{B}\min(|X|,|Y|)e^{-3\mu l/2} \int_{0}^{\mu l/2v}\frac{e^{vt}-1}{t}e^{-t^2}dt\notag
    \\
    \leq&c\norm{A}\norm{B}\min(|X|,|Y|)e^{-3\mu l/2} \bigg(\int_{0}^{1} \frac{e^{vt}-1}{t}e^{-t^2}dt+\frac{e^{\mu l/2}}{v}+\frac{e^v}{v}\bigg)\notag
    \\
    \leq&c\norm{A}\norm{B}\min(|X|,|Y|)e^{-\mu l} .
\end{align}
We are now left with 
\begin{equation}
    \bigg|\int_{-\mu l/2v}^{\mu l/2v}\frac{e^{-t^2-b^2}}{t-ib}\langle A,B^{3l/4}(t)\rangle_\phi dt\bigg|
\end{equation} 
whose decay rate depends on $g$ the decay rates of  ordinary correlators.

By (\ref{eq:minsupportclustering}), we have
  \begin{equation} \label{eq:minsupportclustering'}
     |\langle A,B^{3l/4}(t)\rangle_\phi|\leq c\min(|X|,|Y|)\norm{A}\norm{B}g(l/4),
 \end{equation}
\begin{rmk}
We have $\min(|X|,|Y|)$ above instead of $|X|$ because we could have switched the roles of $A$ and $B$ in the entire proof.
\end{rmk}
Now, we fix an $\epsilon>0$ and estimate
\begin{align}\label{eq:epsestimate}
    &\bigg|\int_{-\mu l/2v}^{\mu l/2v}\frac{e^{-t^2-b^2}}{t-ib}\langle A,B^{3l/4}(t)\rangle_\phi dt\bigg|\notag
    \\
    =&\bigg|\int_{0}^{\mu l/2v}\frac{e^{-t^2-b^2}}{t-ib}\langle A,B^{3l/4}(t)\rangle_\phi -  \frac{e^{-t^2-b^2}}{t+ib}\langle A,B^{3l/4}(-t)\rangle_\phi dt\bigg|\notag
    \\
    \leq& \bigg|\int_0^\epsilon\frac{e^{-t^2-b^2}}{t-ib}\langle A,B^{3l/4}(t)\rangle_\phi -  \frac{e^{-t^2-b^2}}{t+ib}\langle A,B^{3l/4}(-t)\rangle_\phi dt \bigg|\notag
    \\
    &+\bigg|\int_{\epsilon}^{\mu l/2v}\frac{e^{-t^2-b^2}}{t-ib}\langle A,B^{3l/4}(t)\rangle_\phi -  \frac{e^{-t^2-b^2}}{t+ib}\langle A,B^{3l/4}(-t)\rangle_\phi dt\bigg|\notag
    \\
    \leq& \bigg|\int_0^\epsilon\frac{e^{-t^2-b^2}}{t-ib}\langle A,B^{3l/4}(t)\rangle_\phi -  \frac{e^{-t^2-b^2}}{t+ib}\langle A,B^{3l/4}(-t)\rangle_\phi dt \bigg|\notag
    \\
    &+\int_{\epsilon}^{\infty}\frac{e^{-t^2}}{t}|\langle A,B^{3l/4}(t)\rangle_\phi |+  \frac{e^{-t^2}}{t}|\langle A,B^{3l/4}(-t)\rangle_\phi |dt.
\end{align}
To get the desired decay of the last term in eq. (\ref{eq:epsestimate}),  we apply  (\ref{eq:minsupportclustering'}) to $|\langle A,B^{3l/4}(t)\rangle_\phi |$ as well as to $|\langle A,B^{3l/4}(-t)\rangle_\phi |$. Note $\int_\epsilon^{\infty}\frac{e^{-t^2}}{t}dt$ is a finite constant for a fixed $\epsilon$, so
\begin{equation}
    \int_{\epsilon}^{\infty}\frac{e^{-t^2}}{t}|\langle A,B^{3l/4}(t)\rangle_\phi |+  \frac{e^{-t^2}}{t}|\langle A,B^{3l/4}(-t)\rangle_\phi |dt\leq c\min(|X|,|Y|)\norm{A}\norm{B}g(l/4).
\end{equation}
Finally, we are left with the terms
\begin{align}
    &\bigg|\int_0^\epsilon\frac{e^{-t^2-b^2}}{t-ib}\langle A,B^{3l/4}(t)\rangle_\phi -  \frac{e^{-t^2-b^2}}{t+ib}\langle A,B^{3l/4}(-t)\rangle_\phi dt \bigg|\notag
    \\
    \leq& \int_0^\epsilon e^{-t^2-b^2}\frac{|\langle A,B^{3l/4}(t)\rangle_\phi(t+ib)-\langle A,B^{3l/4}(-t)\rangle_\phi(t-ib)|}{t^2+b^2}dt\notag
    \\
    \leq & \int_0^\epsilon\frac{|\langle A,B^{3l/4}(t)\rangle_\phi|b+|\langle A,B^{3l/4}(-t)\rangle_\phi| b}{t^2+b^2}dt\notag
    \\
    &+\int_0^\epsilon\frac{t|\langle A,B^{3l/4}(t)\rangle_\phi-\langle A,B^{3l/4}(-t)\rangle_\phi|}{t^2+b^2}dt
\end{align}
To the first integral, apply (\ref{eq:minsupportclustering'}) again while noting that the remaining integral $\int_0^\epsilon\frac{b}{t^2+b^2}dt$ gives $\arctan(\epsilon/b)$ which is bounded for any $b$. Therefore,
\begin{equation}
    \int_0^\epsilon\frac{|\langle A,B^{3l/4}(t)\rangle_\phi|b+|\langle A,B^{3l/4}(-t)\rangle_\phi| b}{t^2+b^2}dt \leq c\min(|X|,|Y|)\norm{A}\norm{B}g(l/4).
\end{equation}
To estimate the final integral, recall that  $B^{3l/4}(t)$ is differentiable with respect to $t$. By the mean value theorem,
\begin{align}
    &\int_0^\epsilon\frac{t|\langle A,B^{3l/4}(t)\rangle_\phi-\langle A,B^{3l/4}(-t)\rangle_\phi|}{t^2+b^2}dt  \notag\\ \leq&\int_0^\epsilon\frac{|\langle A,B^{3l/4}(t)\rangle_\phi-\langle A,B^{3l/4}(-t)\rangle_\phi|}{t}dt\\
    \leq &4\epsilon \sup_{-\epsilon<t<\epsilon}\big|\frac{d}{dt'}\bigg|_{t'=t}\langle A,B^{3l/4}(t')\rangle_\phi\big|=4\epsilon \sup_{-\epsilon<t<\epsilon}\big|\langle A,\frac{dB^{3l/4}(t')}{dt'}\bigg|_{t'=t}\rangle_\phi\big|.
\end{align}
 Since $B^{3l/4}(t')$ is supported on $B^{3l/4}(Y)$ for all $t'$, the support of $\frac{dB^{3l/4}(t')}{dt'}\big|_{t'=t}$ is also contained in $B^{3l/4}(Y)$. Next we bound its norm, 
\begin{align}
    &\norm{\frac{dB^{3l/4}(t')}{dt'}\big|_{t'=t}}
    \leq \norm{\frac{d}{dt'}\bigg|_{t'=0}\tau_{t+t'}(B)}
    =\norm{\lim_{|\Lambda|\rightarrow \infty}\sum_{Z\subset \Lambda}[\Phi(Z),\tau_{t}(B)]}\notag
    \\
    \leq & \lim_{|\Lambda|\rightarrow \infty}\norm{\sum_{Z\subset \Lambda: Z\cap Y\ne \emptyset}[\Phi(Z),\tau_{t}(B)]}+\lim_{|\Lambda|\rightarrow \infty}\norm{\sum_{Z\subset \Lambda: Z\cap Y=\emptyset}[\Phi(Z),\tau_{t}(B)]}.
\end{align}
We bound the former term using (\ref{LRassumption}),
\begin{align}
    \lim_{|\Lambda|\rightarrow \infty}\norm{\sum_{Z\subset \Lambda: Z\cap Y\ne \emptyset}[\Phi(Z),\tau_{t}(B)]}
   \leq 2\norm{B}\sum_{y\in Y}\sum_{Z\ni y}\norm{\Phi(Z)}\leq v|Y|\norm{B}.
\end{align}
As for the latter, apply (\ref{lr}) and then (\ref{LRassumption})
\begin{align}
    &\lim_{|\Lambda|\rightarrow \infty}\norm{\sum_{Z\subset \Lambda: Z\cap Y=\emptyset}[\Phi(Z),\tau_{t}(B)]}\notag
    \\
    \leq & \sum_{Z\cap Y=\emptyset}\norm{[\Phi(Z),\tau_{t}(B)]}\notag
    \\
    \leq  & \sum_{Z\cap Y=\emptyset} 2\norm{\Phi(Z)}\norm{B}|Z|e^{-\mu d(Z,Y)}(e^{v|t|}-1)\notag
    \\
    \leq & 2\norm{B}(e^{v\epsilon}-1)\sum_{Z\cap Y=\emptyset} \norm{\Phi(Z)}|Z|e^{-\mu d(Z,Y)}\notag
    \\
    \leq & 2\norm{B}(e^{v\epsilon}-1)\sum_{r=1}^\infty e^{-\mu r} \sum_{ d(z,Y)\in (r-1,r]}\sum_{Z\ni z}\norm{\Phi(Z)}|Z|\notag
    \\
    \leq & v\norm{B}(e^{v\epsilon}-1)\sum_{r=1}^\infty e^{-\mu r} D(r),
\end{align}
where $|t|<\epsilon$ and $D(r):=|\{z\in \Gamma: d(z,Y)\in (r-1,r]\}|$. Clearly, $D(r)$ grows polynomially with respect to $r$, so $\sum_{r=1}^\infty e^{-\mu r} D(r)$ is bounded by a constant linearly dependent on $D(1)$. This is dominated by $|Y|$ due to (\ref{polygrowth}). Overall, \begin{equation} \label{overall}
    \norm{\frac{dB^{3l/4}(t')}{dt'}\big|_{t'=t}}<C(\mu,v,\epsilon)\norm{B}|Y|.
\end{equation}
Then we apply   (\ref{eq:minsupportclustering})  to $\big|\langle A,\frac{dB^{3l/4}(t')}{dt'}\big|_{t'=t}\rangle_\phi\big|$ and obtain
\begin{equation}
    \int_0^\epsilon\frac{t|\langle A,B^{3l/4}(t)\rangle_\phi-\langle A,B^{3l/4}(-t)\rangle_\phi|}{t^2+b^2}dt\leq c\norm{A}\norm{B}\min(|X|,|Y|)|Y|g(l/4).
\end{equation}

Furthermore, the canonical correlator is symmetric, so if we swap $A$ and $B$ from the very beginning, we get the same overall bound except an additional factor of $|X|$ instead of $|Y|$. Thus, we actually have
\begin{equation}
    \int_0^\epsilon\frac{t|\langle A,B^{3l/4}(t)\rangle_\phi-\langle A,B^{3l/4}(-t)\rangle_\phi|}{t^2+b^2}dt\leq c\norm{A}\norm{B}\min(|X|^2,|Y|^2)g(l/4).
\end{equation}

The procedure for bounding the third term is entirely similar.
\subsection{Summary}
In summary, \begin{equation}
    |\langle \langle A,B\rangle\rangle_\phi|\leq c\norm{A}\norm{B}\min\{|X|^2,|Y|^2\}g'(l),
\end{equation} where $g'(l)$ is the slower decaying one between $g(l/4)$ and $e^{-\mu l/2}$.
\section{Discussion}
In the present paper, we focus on quantum spin systems with short-range interactions. In particular, interactions are assumed to satisfy a concrete condition (\ref{LRassumption}). However, our analysis can be adapted readily if we substitute (\ref{LRassumption}) with many other conditions for short-range interactions such as those analyzed in \cite{NS, HK}. Lieb-Robinson bounds have also been proven for certain spin systems with long-range interactions (see \cite{NS, HK, matsuta2017improving}). However, these bound may not lead to approximate locality defined here. For example, \cite{matsuta2017improving} demonstrates that systems with long-range interactions could have unbounded information propagation speeds. Therefore, it is interesting to investigate decay rates of canonical correlators in systems with long-range interactions. 

\printbibliography

\end{document}